\newlength{\mywidth}
\newcommand\bigfrown[2][\textstyle]
\newtheorem{theorem}{Theorem}[section]
\theoremstyle{definition}
\newtheorem{lemma}[theorem]{Lemma}
\newtheorem{remark}[theorem]{Remark}
\newcommand*{\email}[1]{\texttt{#1}}
\providecommand{\keywords}[1]{\textbf{\textit{Keywords:}} #1}
\begin{document}
\title{Extending editing capabilities
	of subdivision schemes by refinement of point-normal pairs}
\author{
	Evgeny Lipovetsky\footnote{Corresponding author}
	\footnote{The contribution of Evgeny Lipovetsky is part of his
		Ph.D.  
		research conducted at Tel-Aviv University.}
	$^{,}$\thanks{\email{evgenyl@post.tau.ac.il}, 
		School of Computer Sciences, Tel-Aviv Univ.,Israel} \and 
	Nira Dyn\thanks{\email{niradyn@post.tau.ac.il}, School of Mathematical Sciences, Tel-Aviv Univ., Israel}}	
\maketitle	
	\begin{abstract}
In this paper we extend the 2D circle average of \cite{ld:16} 
to a 3D binary average of
point-normal pairs, and study its properties. We modify classical 
surface-gener- ating linear subdivision schemes with this average
obtaining surface-generating schemes refining point-normal pairs. 
The modified schemes give the possibility to generate more geometries
by editing the initial normals.
For the case of input data consisting of a mesh only, we present a method
for computing "naive" initial normals from the initial mesh.
The performance of several modified schemes is compared to their linear
variants, when operating on the same initial mesh,
and examples of the editing capabilities of the modified schemes
are given.
In addition we provide a link
to our repository, where we store the initial and refined mesh files, and the implementation code. Several videos, demonstrating the editing capabilities of the initial normals
are provided in our Youtube channel.
	\end{abstract}
	
\keywords
{		
	surface-generating 
	subdivision refining 3D point-normal pairs,
	3D circle average,
	surface design.
}

%------------------------------------------------------------------------------
\section{Introduction}
In a previous paper \cite{ld:16} we introduced a weighted binary average 
of two 
2D point-normal pairs (PNPs),
termed {\em circle average}, and defined subdivision schemes based on it, which 
refine PNPs in 2D, and generate curves. 
This paper presents a method for extending the 2D circle average to 3D,  
a method applicable to any 2D weighted binary average.
The extension of the 2D circle average to 3D preserves several important 
properties of the 2D circle average, and even extends some of them.

With the 3D circle average we modify classical linear 
surface-generating
subdivision schemes refining points, to surface-generating schemes refining 
PNPs.
Our methodology in modifying linear schemes, is the same as that in the 2D 
case:
we first write the refinement rules of a converging linear scheme in terms of 
repeated weighted linear binary averages, and then replace each weighted linear binary average by the 3D circle average with the same weight. 
As in the case of the modified 2D schemes \cite{ld:16},  
our new 3D schemes also enrich the variety of geometries that 
can be generated, 
just by editing the initial normals.                                                                                
\subsection{Contribution}
\label{subsec:contribution}
With subdivision schemes refining points, the only way to alter
the geometry generated from a given initial mesh is to change the location
of the vertices of the mesh. The schemes we design in this work
refine point-normal pairs. Such a scheme can generate a richer variety of geometries by editing also the initial normals.

With our approach, we can modify any convergent linear scheme refining
points into a scheme refining PNPs. We can enrich
any subdivision-based software system by establishing new editing tools.
\subsection{Outline}
\label{sub:outline}
Sections \ref{sec:ca_def}, \ref{sec:prop_3D_ca} present the extension 
to 3D of the 2D 
circle average, and 
asserts properties of the 3D circle average, in particular the Consistency 
Property, which makes the weighted binary operation an average. 
In Section \ref{sec:subdivision} we explain our methodology for modifying linear subdivision 
schemes refining points to schemes refining PNPs. We give a method  
(algorithm) to write a weighted linear average of several elements in terms 
of repeated 
weighted linear binary averages. We also present a method for defining "naive" 
normals from a mesh  when initial normals are not given.
In Section \ref{sec:performance} four modified surface-generating schemes are discussed. 
Their performance on two initial meshes is compared
with that of their linear counterparts. 
The editing capability of the modified schemes is demonstrated by examples
presented in three videos and a figure of three snapshots from one of the videos.
A link to the videos is given.
A short discussion of the implementation consists of Section \ref{sec:implem}.
The paper ends in Section \ref{sec:future} with Conclusions and Future Work.
%------------------------------------------------------------------------------
\section{Related work}
Classical surface-generating linear subdivision schemes refine points given 
at the vertices of a mesh \cite{pr:08}. In recent years more involved data at the vertices 
of a mesh are refined by subdivision schemes, such as by Hermite schemes 
(see e.g. \cite{ms:12}), and by manifold-valued schemes (see e.g.
 \cite{w:14}). 
The information in 3D PNPs is less than that in 3D data refined by Hermite 
schemes. In the latter case the data is "linear" and is refined by a linear scheme, while  the unit normals of 
a surface is a nonlinear functional applied to the surface (it is the 
direction of the cross-product of two tangents). Indeed the circle average
and the modified schemes based on it are  nonlinear.

Previous works on subdivision schemes that refine PNPs are based on a 
binary operation between two 
PNPs which is used to define the insertion rule in an interpolatiory scheme
\cite{jue:07p}, \cite{aihjjg:16}.

There are various papers using the same methodology as ours in adapting
linear refining-points subdivision schemes to schemes refining other types of geometric objects.
For example: manifold-valued data (see e.g. \cite{ds:14}),
sets in $\mathbb{R}^d$ (see e.g. \cite{kd:13}), and
nets of functions (see e.g. \cite{cd:11}).
%------------------------------------------------------------------------------
\section{The circle average}
\label{sec:ca_def}
In this section we introduce an extension of the 2D \textit{circle average} 
- a weighted binary average of two 2D point-normal pairs (PNPs) - to an average
of two 3D PNPs. First we recall the definition of the 2D circle average.

\subsection{The circle average in 2D}
Given two PNPs in 2D, each consisting of a point and a normal 
unit vector, $P_0 = (p_0,n_0), P_1 = (p_1, n_1)$, and a real weight 
$\omega \in [0,1]$, the circle average produces a new PNP $P_\omega = 
P_0\circledcirc_\omega P_1 = (p_\omega, n_\omega)$.

The point $p_\omega$ is on an auxiliary arc $\bigfrown{P_0 P_1}$,
at arc distance $\omega\theta$ from $p_0$,
where $\theta$ is the angle between $n_0$ and $n_1$.
The normal
$n_\omega$ is the geodesic average of $n_0$ and $n_1$.
For the definition of $\bigfrown{P_0 P_1}$ and for
more details consult \cite{ld:16}.
%------------------------------------------------------------------------------
\subsection{Construction of the circle average in 3D}
All the objects mentioned in the rest of the paper, specifically points and 
vectors, are in 3D, if not stated otherwise.

First, we introduce some notation.
For two vectors $u,v$, with $u \times v \ne 0$,
$z(u, v)$ denotes the normalized vector 
in direction $u \times v$.
Note that 
\begin{align}
z(\alpha u + \beta v, \gamma u + \delta v) = z(u,v), 
\text{ for }
\alpha, \beta, \gamma, \delta \in \mathbb{R},
\text{ s.t. } \alpha^2 + \beta^2 > 0, 
%\text{ and } 
\gamma^2 + \delta^2 > 0.
\label{eq:cross_preserv}
\end{align}
For a point $p$ and a vector $n$, let $\Pi(p,n)$ be the plane 
which passes through the point $p$ and has the normal $n$. 

For $P_0 = (p_0, n_0)$ and $P_1 = (p_1, n_1)$,
two PNPs to be averaged, we consider the two parallel planes 
$\Pi_0 = \Pi(p_0, z(n_0, n_1))$, $\Pi_1 = \Pi(p_1, z(n_0, n_1))$.
The length of the projection of $[p_0, p_1]$ on 
$z(n_0, n_1)$, which is the distance between $\Pi_0$ and $\Pi_1$,
is denoted by $\hbar$ . 
We define $\Pi_\omega$ to be the plane 
parallel to $\Pi_0$, $\Pi_1$, which is at distance 
$\omega\hbar$ from $\Pi_0$ towards $\Pi_1$.
We say that $P = (p, n)$ belongs to a plane $\Pi, P \in \Pi$, if
both $p$ and $n$ are in $\Pi$.

Let $P_\omega = P_0 \circledast_\omega P_1$ denote the circle average in 3D.
The construction of $P_\omega = (p_\omega, n_\omega)$ is done by 
the following procedure. 
(See Fig.~\ref{fig:construction3D} for an example.)
\begin{figure}[h]
	\centering
	\includegraphics[trim={0cm 0cm 0cm 1cm},clip,scale=1.1]
	{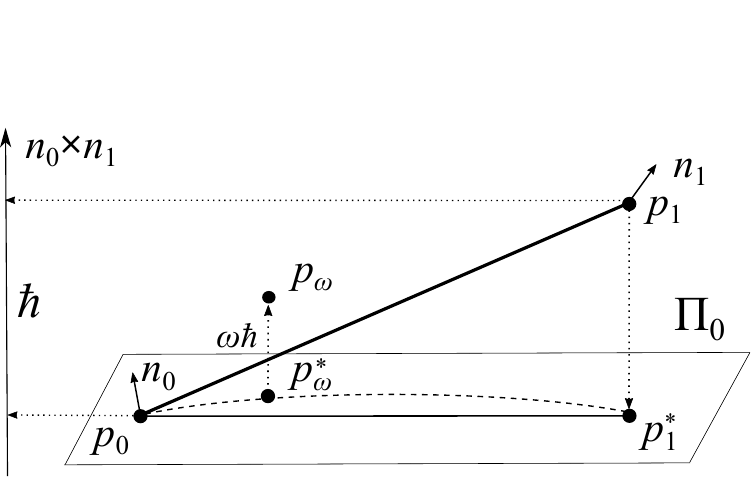} 
	\captionsetup{justification=centering}
	\caption{Construction of $P_0 \circledast_{\omega} P_1$ in 3D.}
	\label{fig:construction3D}
\end{figure}
\begin{enumerate}[(i)]
	\item Project $p_1$ on $\Pi_0$ to obtain $p^*_1$. 
	Note that $P^*_1 = (p^*_1, n_1) \in \Pi_0$,
	and that also $P_0\in\Pi_0$.
	\item Compute the 2D circle average 
	$ P_0 \circledcirc_\omega P^*_1$
	in a local coordinate system in $\Pi_0$, and
	convert the 2D result to a PNP 
	$P^*_\omega = (p^*_\omega, n_\omega)$ in 3D.
	\item Project $p^*_\omega$ on $\Pi_\omega$.
	Obtain $P_\omega = (p_\omega, n_\omega)$.
\end{enumerate}
Note that if $P_0, P_1 \in \Pi_0$ then the 3D average 
reduces to the 2D average.
As in the 2D case, the construction is not defined when $\theta = \pi$.
Although, when $\theta = 0$, the direction $z$ is not defined, the 3D average
can be obtained by continuity as shown in Section \ref{subsec:circ_as_lin}.
\begin{remark}
	It is easy to see that any 2D average of two PNPs  
	can be extended by the above method to a 3D average.
\end{remark}

\section{Properties of the 3D circle average}
\label{sec:prop_3D_ca}
Many of the properties of the 2D circle average are preserved 
or enhanced by the 3D circle average.

\subsection{Consistency}
%\begin{property}(Consistency)
%\end{property}
In \cite{ld:16}, we prove the \textit{\textbf{consistency}} property 
of the circle average in 2D.
Here we argue that the 3D circle average also has this property, namely
$\forall t,s,k \in [0,1]$,
\begin{align}
(P_0 \circledast_t P_1) \circledast_k (P_0 \circledast_s P_1) 
= P_0 \circledast_{\omega^*} P_1,\  \omega^* = ks+(1-k)t.
\label{eq:consistency}
\end{align}
Indeed, the consistency of the normals
is guaranteed by the consistency of the geodesic average. 
For the consistency of the points observe that
by (\ref{eq:cross_preserv}) the projection direction of steps (i) and (iii)
in the construction 
of the 3D circle average is the same in all the averaging operations
in (\ref{eq:consistency}). 
Also, all averages in (\ref{eq:consistency}), 
except for $\circledast_k$,
are performed in $\Pi_0$, and then moved in the
direction $z(n_0, n_1)$, by a length which is a linear average between
zero and $\hbar$. 
The average $\circledast_k$ is performed in a plane parallel to $\Pi_0$
translated in the $z(n_0, n_1)$ direction by an appropriate linear average 
of zero and $\hbar$.
Since both the 2D circle average and the linear average
have the consistency property, and they are performed independently 
in orthogonal directions,
the consistency of the points is asserted. 
See Figure \ref{fig:consistency} for an example.
\begin{figure*} %[!htb]
	\qquad
	\begin{subfigure}[b]{0.4\textwidth}
		\centering
		\includegraphics[scale=1.1]
		{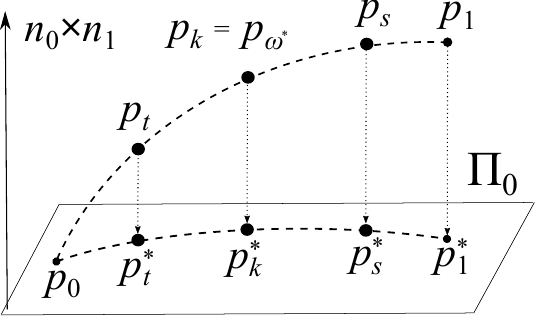} 
		\caption{Consistency of points.}
	\end{subfigure}
	\qquad\qquad
	\begin{subfigure}[b]{0.4\textwidth}
		\centering
		\includegraphics[scale=1.1]
		{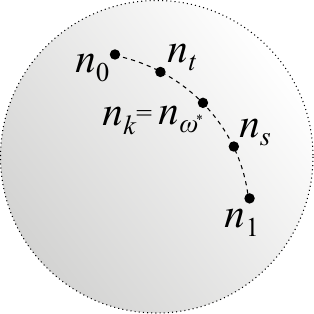} 
		\caption{Consistency of normals.}
	\end{subfigure}
	\captionsetup{justification=centering}
	\caption{Consistency property. \\
		Note that $z$ and $\bigfrown{P_0P_1^*}$ are
		the same for the circle average with weights $t,s,k$.}
	\label{fig:consistency}
\end{figure*}

\subsection{Helix trace}
If we change the weight in $\circledast_\omega$ 
continuously and track
the position of the points of $P_0 \circledast_\omega P_1$, 
then, in a generic 3D case,
we obtain a helix instead of the arc 
$\bigfrown{P_0 P_1}$
in the 2D case. We denote this helix by $H(P_0, P_1)$. Note that
the projection on $\Pi_0$ of $H(P_0, P_1)$ is $\bigfrown{P_0P_1^*}$ (see Figure \ref{fig:consistency}a).

\subsection{Limit cases of the circle average}
\label{subsec:circ_as_lin}
The investigation of several limit cases 
of the circle average requires the next lemma. The proof of this lemma
is straightforward but we give it for the convenience of the reader.
\begin{lemma}
	The intersection point $x_{\omega}$ of $[p_0, p_1]$ and
	$\Pi_\omega$ is given by
	\begin{align}
	x_\omega = (1 - \omega) p_0 + \omega p_1.
	\label{eq:proj_inx_work_plane}
	\end{align}
	\label{lemma:lin_lemma}
\end{lemma}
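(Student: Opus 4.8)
The plan is to reduce this geometric statement to a single scalar linear equation, exploiting the fact that all three planes $\Pi_0$, $\Pi_\omega$, $\Pi_1$ share the same unit normal $z := z(n_0,n_1)$. First I would parametrize the segment affinely as $x(t) = (1-t)p_0 + t p_1$, $t \in [0,1]$, so that the claimed identity becomes the assertion that the parameter value for which $x(t)$ lies in $\Pi_\omega$ is exactly $t = \omega$.

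Next I would translate membership in $\Pi_\omega$ into an equation in this parameter. Since $\Pi_\omega$ is the plane with normal $z$ lying at distance $\omega\hbar$ from $\Pi_0$ towards $\Pi_1$, a point $x$ belongs to $\Pi_\omega$ precisely when its signed distance to $\Pi_0$ along $z$ equals $\omega\hbar$, i.e. $\langle x - p_0, z\rangle = \omega\hbar$. By the definition of $\hbar$ as the length of the projection of $[p_0,p_1]$ onto $z$, we have $\langle p_1 - p_0, z\rangle = \hbar$, where the orientation ``towards $\Pi_1$'' fixes the sign (had the opposite sign occurred, the matching sign change on both sides of the equation would leave the conclusion unchanged).

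The computation is then immediate: substituting the parametrization and using linearity of the inner product gives $\langle x(t) - p_0, z\rangle = t\langle p_1 - p_0, z\rangle = t\hbar$. Equating with $\omega\hbar$ yields $t\hbar = \omega\hbar$, hence $t = \omega$, and therefore $x_\omega = (1-\omega)p_0 + \omega p_1$, as required.

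The only genuine obstacle is the degenerate configuration $\hbar = 0$, in which $[p_0,p_1]$ is parallel to the common plane and $\Pi_0 = \Pi_\omega = \Pi_1$ coincide, so the ``intersection point'' is not unique and the equation $t\hbar = \omega\hbar$ no longer determines $t$. I would dispose of this by restricting to the generic case $\hbar > 0$ implicit in the construction, where the division by $\hbar$ is legitimate; the formula $(1-\omega)p_0 + \omega p_1$ is then the natural value to adopt by continuity in the limiting case as well.
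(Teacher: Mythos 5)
Your proof is correct, but it follows a genuinely different route from the paper's. The paper argues synthetically: it projects $p_0$ and $p_1$ onto $\Pi_\omega$ to obtain $\widehat{p_0}, \widehat{p_1}$, introduces an auxiliary plane $\widehat{\Pi}$ containing both $[p_0,p_1]$ and $[\widehat{p_0},\widehat{p_1}]$, shows $x_\omega \in [\widehat{p_0},\widehat{p_1}]$, and then reads off the ratio $\frac{|p_0 x_\omega|}{|p_1 x_\omega|} = \frac{|p_0\widehat{p_0}|}{|p_1\widehat{p_1}|} = \frac{\omega}{1-\omega}$ from the similar triangles $\triangle p_0\widehat{p_0}x_\omega$ and $\triangle p_1\widehat{p_1}x_\omega$. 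You replace all of this with linear algebra: parametrizing $x(t) = (1-t)p_0 + tp_1$ and characterizing $\Pi_\omega$ by the scalar equation $\langle x - p_0, z\rangle = \omega\hbar$, so that $\langle x(t)-p_0, z\rangle = t\hbar$ forces $t=\omega$. Your version buys three things: it dispenses with the auxiliary constructions, it is uniform in $\omega$ (the similar-triangle ratio formally degenerates at $\omega = 0,1$, where one of the paper's triangles collapses, though the claim is trivial there), and it explicitly isolates the one genuinely degenerate configuration $\hbar = 0$, where the intersection point is not unique --- a case the paper's proof passes over silently. What the paper's synthetic argument buys in return is that it stays within the visual language of the construction (projections between the parallel planes $\Pi_0$, $\Pi_\omega$, $\Pi_1$), matches the accompanying figure, and sets up the same picture reused immediately afterwards in the limit analysis of Section 4.3; either proof is complete for the generic case the lemma is used in.
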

\begin{proof}
	Figure \ref{fig:lin_lemma} illustrates the  proof.
	Let $\widehat{p_0}, \widehat{p_1}$ be the projections of $p_0,p_1$ on 
	$\Pi_\omega$, and let $\widehat{\Pi} = \Pi(p_0, \widehat{n})$ where $\widehat{n}$ is the normalized
	vector $\overrightarrow{p_0p_1} \times 
	\overrightarrow{\widehat{p_0}\widehat{p_1}}$. Note that $\widehat{\Pi}$
	contains the segments $[p_0, p_1]$ and $[\widehat{p_0}, \widehat{p_1}]$.
	Since $[\widehat{p_0}, \widehat{p_1}]$ is in $\Pi_\omega$ then
	$[\widehat{p_0}, \widehat{p_1}]$ is contained in the intersection
	line of $\widehat{\Pi}$ and $\Pi_\omega$.
	Moreover, by the definition of $\widehat{p_0}, \widehat{p_1}$,
	$[\widehat{p_0}, \widehat{p_1}]$ contains
	all the projections of points of $[p_0, p_1]$ on $\Pi_\omega$,
	in particular
	$x_\omega \in [\widehat{p_0}, \widehat{p_1}]$. Thus the two
	triangles
	$\triangle p_0 \widehat {p_0}x_\omega$ and $\triangle p_1 \widehat{p_1}
	x_\omega$ are in $\widehat{\Pi}$. 
	These triangles are similar, having equal angles. 
	Therefor  
	$\frac{|p_0x_\omega|}{|p_1x_\omega|} = \frac{|p_0\widehat{p_0}|}{|p_1\widehat{p_1}|} = 
	\frac{\omega}{1-\omega}$,
	which proves (\ref{eq:proj_inx_work_plane}).
\end{proof}
\begin{figure}[h]
	\centering
	\includegraphics[trim={0cm 0.5cm 0cm 0cm},clip,scale=1.1]
	{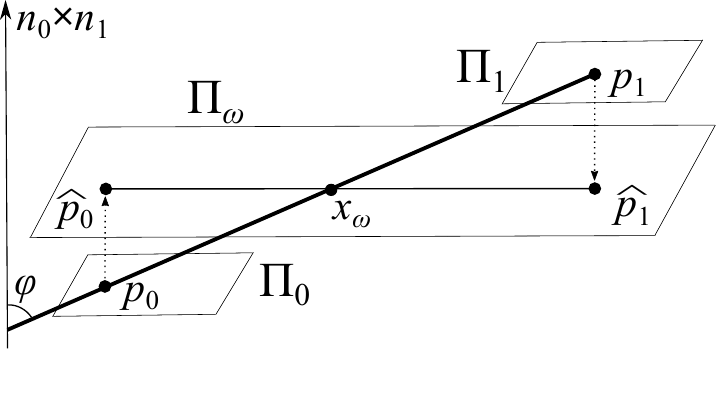} 
	\captionsetup{justification=centering}
	\caption{The setup of Lemma \ref{lemma:lin_lemma}.
		The intersection point of $[p_0, p_1]$ and $\Pi_\omega$
		is the linear average of $p_0$ and $p_1$ with weight $\omega$.}
	\label{fig:lin_lemma}
\end{figure}
Two basic properties of the 2D circle average are that 
it is not defined for $\theta = \pi$, and that
its point tends to
the 2D linear average
when $\theta \rightarrow 0$. A similar behavior holds in 3D.
Furthermore, in the 3D case, there are two parameters,
$\theta$ and the angle $\varphi$ between $n_0 \times n_1$ and
$\overrightarrow{p_0p_1}$ (see Figure \ref{fig:lin_lemma}).
Note that $\theta \in [0, \pi)$ and $\varphi \in [0, \pi]$.

Next we show that the point of $P_0 \circledcirc_{\omega} P_1$
tends to $x_\omega = (1-\omega)p_0 + \omega p_1$ as either 
$\theta \rightarrow 0$ or $\varphi \rightarrow 0$ (or $\varphi \rightarrow \pi$).

First we analyze the case $\varphi \rightarrow 0$ 
or $\varphi \rightarrow \pi$
for fixed $\theta \ge 0$.
\begin{figure}[h]
	\centering
	\includegraphics[scale=1]
	{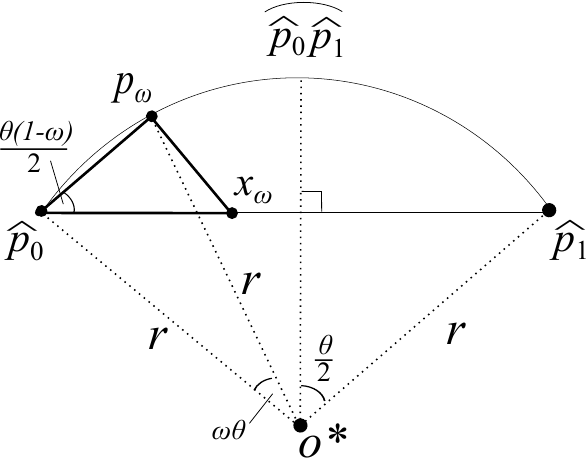} 
	\captionsetup{justification=centering}
	\caption{The triangle $\triangle\widehat{p_0} p_\omega x_\omega$.}
	\label{fig:cos_thrm}
\end{figure}

Using the geometry 
as depicted in Figure \ref{fig:cos_thrm},
we get $|\widehat{p_0}p_\omega| = |\widehat{p_0}\widehat{p_1}|
\dfrac{\sin\left(\frac{\theta\omega}{2}\right)}
{\sin\left(\frac{\theta}{2}\right)}$.
We express 
$|p_\omega x_\omega|$ by the cosine theorem in the triangle 
$\triangle\widehat{p_0} p_\omega x_\omega \subset \Pi_\omega$,
\begin{align}
|p_\omega x_\omega|^2 &=
(|\widehat{p_0}\widehat{p_1}|\omega)^2 +
\bigg(|\widehat{p_0}\widehat{p_1}|
\dfrac{\sin\left(\frac{\theta\omega}{2}\right)}
{\sin\left(\frac{\theta}{2}\right)}\bigg)^2 \notag \\
&-2|\widehat{p_0}\widehat{p_1}|^2\omega
\dfrac{\sin\left(\frac{\theta\omega}{2}\right)}
{\sin\left(\frac{\theta}{2}\right)}
\cos\left(\frac{\theta\left(1-\omega\right)}{2}\right).
\label{eq:cos_in_work_plane}
\end{align}
Since $|\widehat{p_0}\widehat{p_1}| = |p_0 p_1| \sin \varphi$,
all the terms in the right side of
(\ref{eq:cos_in_work_plane}) tend to zero, independently of the value of 
$\theta$.
Thus, $p_\omega \rightarrow x_\omega$, as
$\varphi \rightarrow 0$ (or $\varphi \rightarrow \pi$).

In case $\theta \rightarrow 0$ and $0 < \varphi < \pi$, the right side of
(\ref{eq:cos_in_work_plane}) is zero because 
\begin{align}
\lim_{\theta \rightarrow 0}{\dfrac{\sin\left(\frac{\theta\omega}{2}\right)}
	{\sin\left(\frac{\theta}{2}\right)}} = \omega.
\end{align}

\subsection{Preservation of special geometries}
In \cite{ld:16} it is shown that if $P_0, P_1$ are sampled from
a circle then $P_0 \circledcirc_\omega P_1 = (p_\omega, n_\omega)$
corresponds to a point on this circle with $n_\omega$ the normal
of the circle at $p_\omega$. We say that the 2D circle average
"preserves circles". This property extends in the case of the 3D  circle average to
"preservation of spheres and cylinders". 
Indeed, any two PNPs sampled from
a sphere are also samples from the big circle $C$, determined by the two
points and the center of the sphere. Thus, the circle average of the two PNPs
is the 2D circle average of the two PNPs sampled from $C$, implying that the
3D circle average "preserves" spheres. 
See Figure \ref{fig:geom_preserv}a for 
an example. 

Next, consider the case that the two PNPs $P_0, P_1$ are sampled 
from a cylinder
of the form $x = \cos t, y = \sin t, z = t, 0 \leq t \le \pi$.
Note that $z(n_0, n_1)$ is the axis of the cylinder, and that
$H(P_0, P_1)$ is on the cylinder, and its projection on $\Pi_0$
is $\bigfrown{P_0P_1^*}$.
Due to the "circle preservation" of the 2D circle average,
$P_0 \circledcirc_\omega P_1^* = (p^*_\omega, n_\omega)$ corresponds
to a PNP sampled from $\bigfrown{P_0P_1^*}$. 
By (iii) of the construction of the 3D circle average,
$p_\omega$ is on $H(P_0, P_1)$ and the normal to $H(P_0,P_1)$ 
at this point is $n_\omega$.
Thus the 3D circle average "preserves" cylinders. 
See Figure \ref{fig:geom_preserv}b for an example. 
\begin{figure*} [!htb]
	\qquad
	\begin{subfigure}[b]{0.4\textwidth}
		\centering
		\includegraphics[scale=1.1]
		{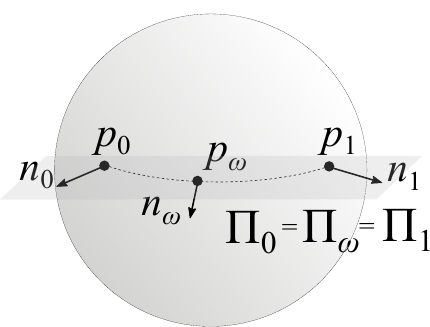} 
		\caption{Sphere preservation.}
	\end{subfigure}
	\qquad\qquad
	\begin{subfigure}[b]{0.4\textwidth}
		\centering
		\includegraphics[scale=1.1]
		{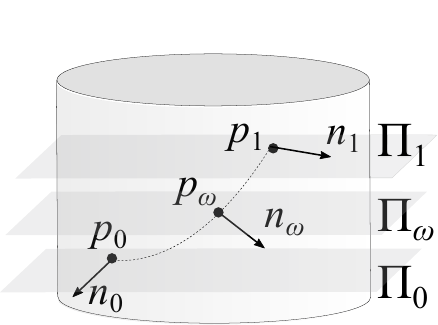} 
		\caption{Cylinder preservation.}
	\end{subfigure}
	\captionsetup{justification=centering}
	\caption{Preservation of special geometries.}
	\label{fig:geom_preserv}
\end{figure*}
%------------------------------------------------------------------------------
\section{Modified subdivision schemes}
\label{sec:subdivision}
We consider subdivision schemes refining point-normal
pairs, which are obtained from converging linear subdivision schemes.
To obtain these schemes we express the linear schemes in terms of repeated,
weighted, linear, binary averages of points and replace these averages by the 
3D circle average.
In this paper we term the so obtained schemes "Modified schemes".

We first explain how we rewrite the refinement rules of any linear 
converging subdivision scheme in terms 
of repeated, weighted, linear, binary averages.
Then we mention four classical surface-generating linear schemes to 
be modified. 
The performance of these schemes and their 
modifications is tested in the next section.
Finally we provide a method 
which computes initial normals, if normals are not given as input.

\subsection{Repeated binary averaging}
\label{subsec:repeated_bin_avg}
Here we propose a method for rewriting a weighted linear average
of several points in terms of 
repeated binary averages.
Consider computing a point $q$ as a weighted average of points
$\{p_i\}_{i=0}^{k-1}$, i.e.
\begin{align}
q = \alpha_{0}p_{0} + \alpha_{1}p_{1} + ... + \alpha_{k-1}p_{k-1},
\label{eq:subd_rule}
\end{align}
where $\alpha_i \in \mathbb{R}, \alpha_i \ne 0, i = 0,...,k-1,$ and
\begin{align}
\sum_{i = 0}^{k-1}{\alpha_i} = 1.
\label{eq:sum_coefs_1}
\end{align}
We rewrite (\ref{eq:subd_rule}) as
\begin{align}
q = (\alpha_{0} + \alpha_{1})
\Big( \frac{\alpha_{0}}{\alpha_{0} + \alpha_{1}}p_{0} + 
\frac{\alpha_{1}}{\alpha_{0} + \alpha_{1}}p_{1}\Big) + 
\alpha_{2}p_{2} + 
... + \alpha_{k-1}p_{k-1},
\label{eq:repeated_bin_av_at_1st}
\end{align}
reducing by one the number of elements in the outer sum and obtaining a 
binary average as the first term. Note that in
(\ref{eq:repeated_bin_av_at_1st}) q is a linear average of $k-1$ points
while in (\ref{eq:subd_rule}) q is a linear average of $k$ points.
We repeat this step
$k-2$ times and obtain the weighted average (\ref{eq:subd_rule}) written as 
a sequence of $k-1$ repeated binary averages. 

To avoid division by zero in this process, we have to guarantee that
\[\sum_{i = 0}^{\ell}{\alpha_i} \neq 0, \ \ell = 1, ..., k-1.\]
We reorder
the terms in (\ref{eq:subd_rule}) such that all positive $\alpha_i$
precede all the negative ones. With this reordering, (\ref{eq:sum_coefs_1})
guarantees that each partial sum 
$\sum_{i = 0}^{\ell}{\alpha_i}$ is positive.

It seems that a challenge is to find an order of summands in (\ref{eq:subd_rule})
that performs best.
Yet our experiments indicate that the performance of a modified scheme is
almost independent of the order of the summands.

\subsection{The linear schemes to be modified}
\label{subsec:mod_subd}
In this paper we modify four classical surface-generating linear subdivision schemes:
Catmull-Clark (CC) \cite{cc:1978}, Kobbelt 4-point (K4) \cite{kob_4pt:1996}, Butterfly (BY) \cite{dgl:90}, and Loop (LP) \cite{loop:87}. 
The modification of all these schemes is done by the method of Section
\ref{subsec:repeated_bin_avg}. We denote a modified scheme by adding
"M" before the acronym of the corresponding linear scheme.  

%------------------------------------------------------------------------------
\subsection{Naive normals}
\label{subsec:naive_norms}
Here we present a method for determining
initial normals at the vertices of a given control mesh,
when the normals are not given as input.

Consider neighboring faces $\{f_i\}_{i=0}^{k-1}$ and neighboring 
vertices $\{v_i\}_{i=0}^{k-1}$ of 
a given vertex $p$ in 
a control mesh (see Figure \ref{fig:naive_norms}).
Denote by $a_i$ the normalized vector $\overrightarrow{pv_i} \times
\overrightarrow{pv_{i+1}}$.
The unit vector $a_i$ defines a normal related to $f_i$.
Let $\gamma_i$ be the angle $\varangle v_ipv_{i+1}$, and let
$\gamma = \sum\limits_{i=0}^{k-1}{\gamma_i}$. 
We suggest the normalized vector 
\begin{align}
n = \frac{a}{\parallel a \parallel} \quad \text{ with }
a = \sum\limits_{i=0}^{k-1}{\frac{\gamma_i}{\gamma}a_i},
\end{align}
as the "naive normal" at the vertex $p$. This method is similar to the one
discussed in Section 3.5 in \cite{mdsb:02}.
\begin{figure} [!h]
	\centering
	\includegraphics[scale=1.0]{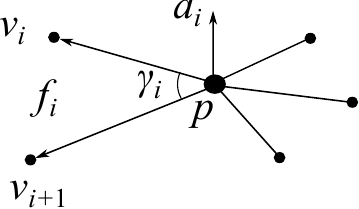} 
	\captionsetup{justification=centering}
	\caption{Determining a naive normal at $p$ from its neighborhood in the mesh.}
	\label{fig:naive_norms}
\end{figure}

%------------------------------------------------------------------------------
\section{Performance of the modified schemes}
\label{sec:performance}
In our examples we consider input meshes that consists of either 
quadrilateral faces or triangular faces. 
We accept meshes with irregular vertices (of valency $\ne 4$ 
in quadrilateral meshes, and $\ne 6$ in triangular meshes).
Our implementation is limited to meshes of exactly one type of faces.
The initial data consists of
the vertices of a mesh with a naive normal attached to each vertex.

\subsection{Comparison methodology}
\label{sec:methodology}
We apply the linear schemes of Section \ref{subsec:mod_subd}
and their modifications on two example meshes with naive normals. 
We compare the
performance of a linear scheme with its modified scheme by inspecting
the generated surfaces and by measuring
estimates of 
dihedral angles and curvatures.
First we explain how we compute these estimates 
using the notation of 
Section \ref{subsec:naive_norms}.
These estimates indicate deviation from $C^1$ smoothness, when considering
the magnitude of the dihedral angles and from $C^2$ smoothness when considering 
the magnitude of the local changes in the 
curvatures.

\subsubsection{Estimating dihedral angles}
In the case of a triangular mesh
we compute the dihedral angle corresponding to an edge as the
angle between the normals of the neighboring triangles
to the edge.

In the case of a quadrilateral mesh we estimate the dihedral angle
at the midpoint of an edge $e$ in the mesh by the following steps:
\begin{enumerate}
	\item Compute the directions $\ell, r$ connecting the midpoint 
	of $e$ with the midpoints of the opposite edges of $e$ in the
	neighboring faces to the left and to the right of $e$, respectively.
	\item Compute $n_\ell = \ell \times \overrightarrow{e}$ and 
	$n_r = r \times \overrightarrow{e}$, where $\overrightarrow{e}$
	is the edge direction.
	\item Compute the angle between $n_\ell$ and $n_r$.
\end{enumerate}
In the following we refer to this angle as the dihedral angle of the edge $e$.
\subsubsection{Measuring discrete curvatures and its local changes}
For every vertex $p$ in a mesh $M$, we compute its discrete curvature 
$K_p$ by
\begin{align}
K_p = \frac{(2\pi - \sum_{i=0}^{k-1}{\gamma_i})}{\mathcal{A}_p},
\quad \text{ with } \mathcal{A}_p = \frac{1}{6}\sum_{i = 0}^{k-1} |pv_i||pv_{i+1}|\sin \gamma_i.
\end{align}
Here $v_k = v_0$, and $\mathcal{A}_p$ stands for the area of the barecentric cell
around $p$ (See e.g. \cite{mdsb:02}.) In the following we refer to $K_p$ as 
the curvature at $p$.

To estimate the magnitude of the local changes in the curvatures,
we first evaluate the curvature at all vertices of the mesh,
and then estimate the local changes
of the curvatures at $p$ as
\begin{align}
	\zeta_p = \big|\max_{v \in V_p}\{K_{v} \} - 
	               \min_{v \in V_p}\{K_{v}\}\big|,
\end{align}
where $V_p$ consists of $p$ and all the adjacent vertices to $p$ in the mesh.

%------------------------------------------------------------------------------
\subsection{Results}
\label{subsec:results}
In studying the performance of the modified schemes, we consider only the 
generated meshes and ignore the generated normals.
Although we do not have a convergence proof for the modified schemes 
investigated in this 
paper, our tests indicate that the generated meshes converge to  a surface.
The convergence of the normals is guaranteed  by general results about 
convergence of manifold-valued subdivision schemes, based on geodesic averages 
(see e.g. \cite{ds:16}).
We do not display the limit of the normals because, as in the 2D case \cite{ld:16},
they are not the normals of the limit surface. Yet, as demonstrated in Table \ref{tbl:angle_err} of Section \ref{subsec:video}, the closer are the initial
normals to the naive normals of the initial mesh, the closer are the limit
normals to the normals of the limit surface.

Two input meshes are studied in this section demonstrating typical performance
of a modified scheme in case of naive initial normals.
One is referred as "Tower" and 
one is referred as "Tube". The Tower mesh is taken both in its quadrilateral
and triangular form.
For each example several iterations of one linear subdivision scheme
and its modified variant are executed. 
The resulting surfaces are depicted in Figures \ref{fig:lp_tower},
\ref{fig:kob4pt_tower}, \ref{fig:butr_tube}.
The colors
indicate the curvature of the final mesh, if not mentioned otherwise.
The colors yellow to red (cyan to blue) indicate positive (negative) values.
All the examples are provided as mesh files in our online repository,
as explained in Section \ref{sec:implem}.
Our observations regarding these examples are based on these files.
See Table \ref{tbl:num_comp} for typical numerical comparisons.
We compute the maximal dihedral angle, $\psi$, and the maximal
$\zeta_p$, $\zeta^*$, for each final mesh.
% Loop, 4iters
%Linear & 14.20855
%Circle 11.04320 & 0.03416 & 15.30905
%K4, 4iters
%Linear 37.11059 & 1.43377 & 14.49588
%Circle 27.97737 & 0.97137 & 16.56477

\begin{center}
%	\begin{tabular}{ |c|c|c| } 
%		\hline
%		scheme & 
%		maximal dihedral&
%		$\zeta^*$\\
%		& angle &  \\ 
%		\hline
%		& & \\ 
%		LP & 14.78\textdegree & 0.034 \\
%		& & \\ 
%		MLP & 11.04\textdegree & 0.031 \\
%		& & \\ 
%		K4 & 37.11\textdegree & 1.23 \\
%		& & \\ 
%		MK4 & 27.98\textdegree & 0.85 \\
%		\hline
%	\end{tabular}
	\begin{tabular}{|c|c|c|c|c|c|c|}
		\hline
		scheme & LP & MLP & CC & MCC & K4 & MK4 \\
		\hline
		$\psi$ & 14.78\textdegree & 11.04\textdegree 
		      & 11.02\textdegree & 8.68\textdegree
		      & 37.11\textdegree & 27.98\textdegree \\
		\hline
		$\zeta^*$ & 0.034 & 0.031 
				  & 0.019 & 0.025
				  & 1.231 & 0.846  \\
		\hline
	\end{tabular}
\captionsetup{justification=centering}
	\captionof{table} {Numerical comparisons for the Tower mesh.}
	\label{tbl:num_comp}
\end{center}
The numerical comparisons indicate that $\zeta^*$ of meshes produced 
by modified $C^2$ schemes (MCC, MLP) are of the same order as $\zeta^*$
of their linear counterparts, while for $C^1$ schemes, $\zeta^*$ of the modified
schemes are significantly smaller.
Also, the decrease rate of the maximal estimated dihedral angle
from one refinement level to its next level  is faster in case of
the modified schemes. 
\begin{remark}
	For $P_0, P_1$ with $\theta = 0$ the circle average of the points is their linear
	average with the same weight (as shown in Subsection \ref{subsec:circ_as_lin}), 
	and the normal is the normal of each of the two PNPs. Thus we conjecture that, if a modified
	scheme converges, then its smoothness equals that of the corresponding linear scheme.
\end{remark}

\subsubsection{Approximating schemes (CC, LP)}
	\begin{figure*} [!htb]
		\begin{subfigure}[b]{0.3\textwidth}
		\centering
		\includegraphics[trim={2.7cm 5.8cm 5.8cm 7cm},clip,scale=0.3]
		{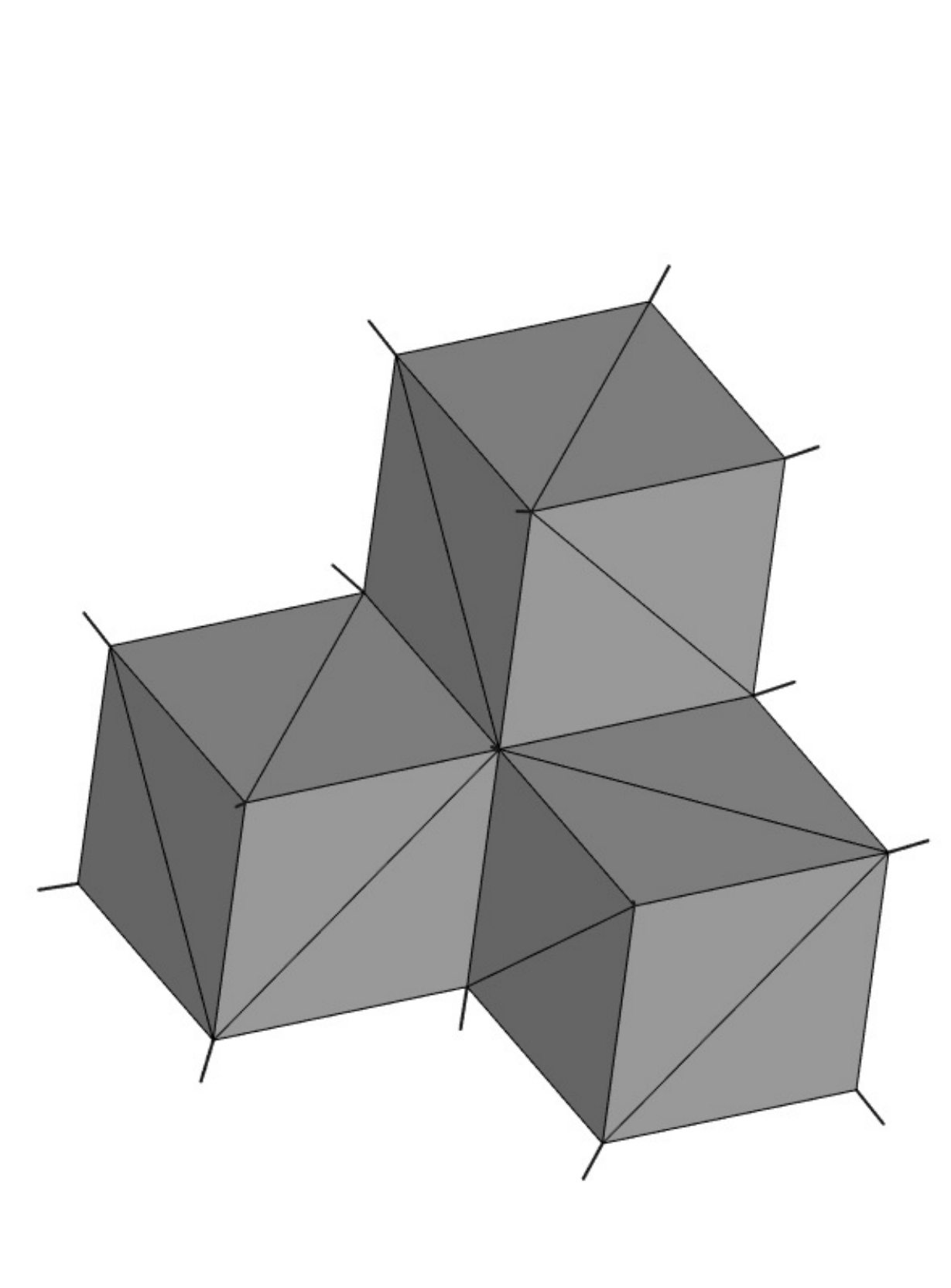}
		\caption{Input mesh with normals.}
	\end{subfigure}
	\quad
	\begin{subfigure}[b]{0.3\textwidth}
		\centering
		\includegraphics[trim={2.5cm 6.6cm 5.8cm 9cm},clip,scale=0.3]
		%{tower4_K4_circ_4iters} 
		{tower3_LP_circ_wNorms_2iters}
		\caption{Modified scheme.}
	\end{subfigure}
	\quad
	\begin{subfigure}[b]{0.3\textwidth}
		\centering
		\includegraphics[trim={1.3cm 2.8cm 0cm 3.5cm},clip,scale=0.2]
		%{tower4_K4_lin_4iters.pdf} 
		{tower3_LP_lin_2iters}
		\caption{Linear scheme.}
	\end{subfigure}
	\captionsetup{justification=centering}
	\caption{Triangulated Tower mesh and meshes generated by MLP and LP after 2 iterations.\\
		Colors indicate magnitudes of curvature
		in the range [-0.25, 0.25].}
	\label{fig:lp_tower}
\end{figure*}
The results of the linear LP scheme and the MLP scheme, applied
to the triangulated Tower, are depicted 
in Figure \ref{fig:lp_tower}. The results of the CC and the MCC schemes,
applied to the quadrilateral Tower, look
very similar, and are barely distinguishable from the LP results. 

The meshes produced by MCC and MLP after four iterations
appear to be "blown up" versions of the meshes generated by the CC and LP
schemes respectively.
The "blown up" meshes are not contained in the convex hull of the initial
mesh, as do the linear variants.

\subsubsection{Interpolating schemes (K4, BY)}
\begin{figure*} [!htb]
	\begin{subfigure}[b]{0.3\textwidth}
		\centering
		\includegraphics[trim={2.7cm 5.8cm 6.7cm 7cm},clip,scale=0.3]
		{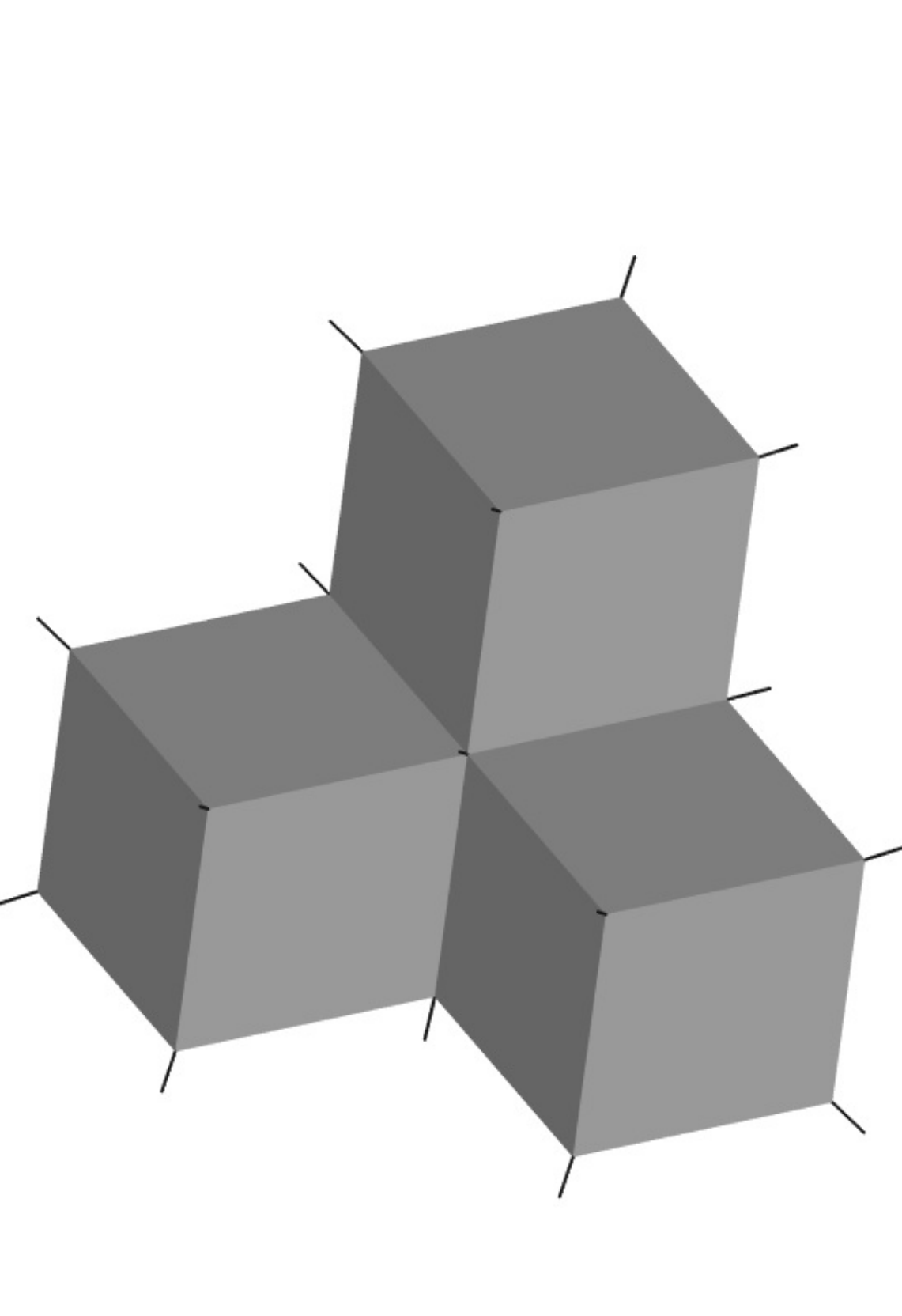}
		\caption{Input mesh with normals.}
	\end{subfigure}
	\begin{subfigure}[b]{0.3\textwidth}
		\centering
		\includegraphics[trim={2.5cm 6.6cm 5.8cm 9cm},clip,scale=0.3]
		%{tower4_K4_circ_4iters} 
		{tower4_K4_circ_wNorms_2iters}
		\caption{Modified scheme.}
	\end{subfigure}
	\quad
	\begin{subfigure}[b]{0.3\textwidth}
		\centering
		\includegraphics[trim={1.3cm 2.8cm 0cm 3.5cm},clip,scale=0.2]
		%{tower4_K4_lin_4iters.pdf} 
		{tower4_K4_lin_2iters}
		\caption{Linear scheme.}
	\end{subfigure}
	\captionsetup{justification=centering}
	\caption{Tower mesh and meshes generated by MK4 and K4 after 2 iterations.\\
		Colors indicate magnitudes of curvature 
		in the range [-0.25, 0.25].}
	\label{fig:kob4pt_tower}
\end{figure*}

The results in Figure \ref{fig:kob4pt_tower} were obtained by applications
of the linear and the modified K4 schemes to the tower mesh.
The MK4 scheme produces
meshes with smoother discrete curvature. However, the result
of the linear variant follows the input control mesh more accurately, as is demonstrated in Fugure \ref{fig:kob4pt_tower}.
A similar behavior is observed in the BY/MBY case, when applied to the
triangulated Tower.

\begin{figure*} %[!htb]
	\begin{subfigure}[b]{0.3\textwidth}
		\centering
		\includegraphics[trim={5.5cm 10cm 2cm 9cm},clip, scale=0.35]
		{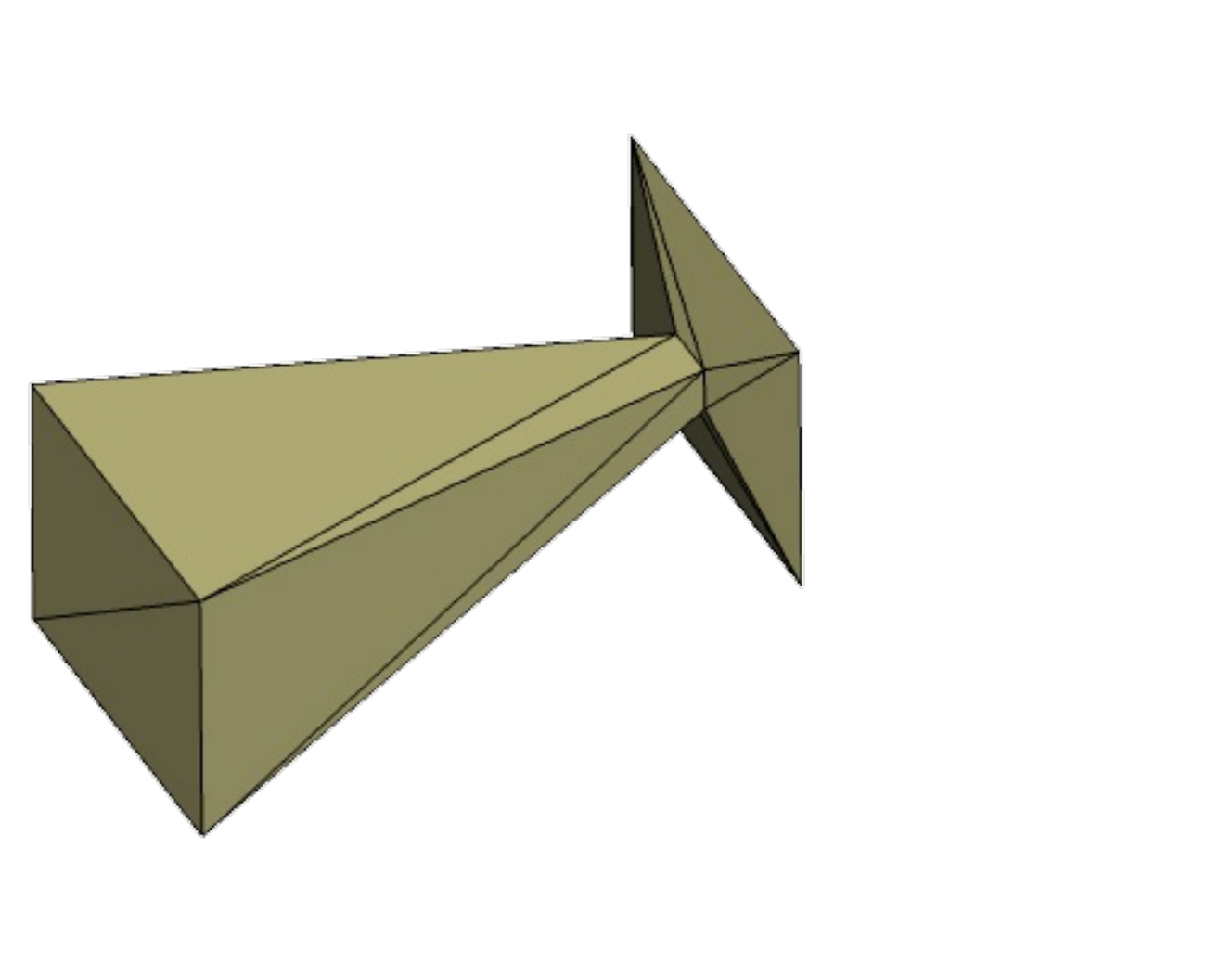} 
		\caption{Tube input mesh}
	\end{subfigure}
	\begin{subfigure}[b]{0.3\textwidth}
		\centering
		\includegraphics[trim={5cm 10cm 3cm 9cm},clip, scale=0.35]
		{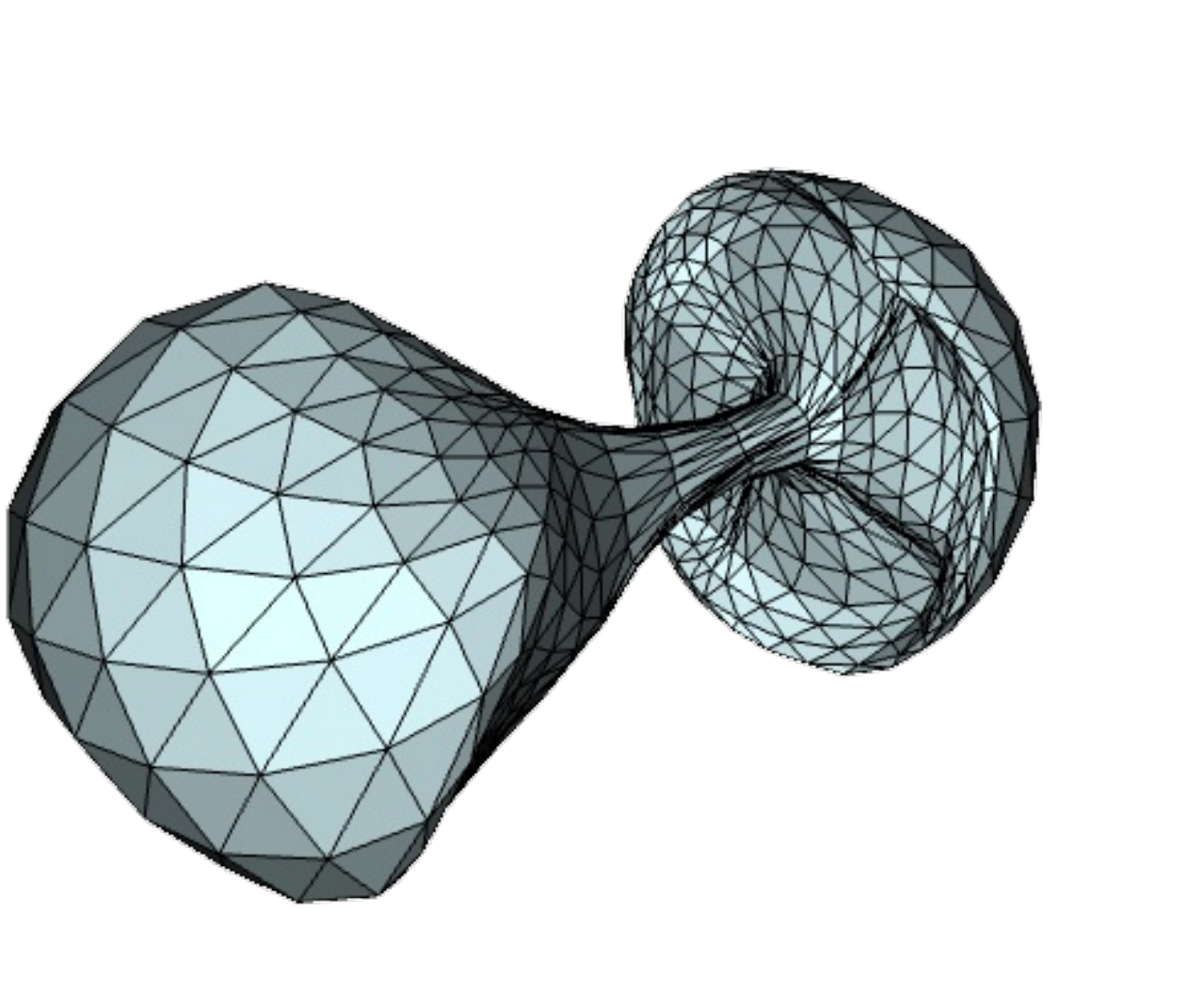} 
		\caption{Modified scheme}
	\end{subfigure}
	\quad
	\begin{subfigure}[b]{0.3\textwidth}
		\centering
		\includegraphics[trim={20cm 5cm 18cm 8cm},clip, scale=0.3]
		{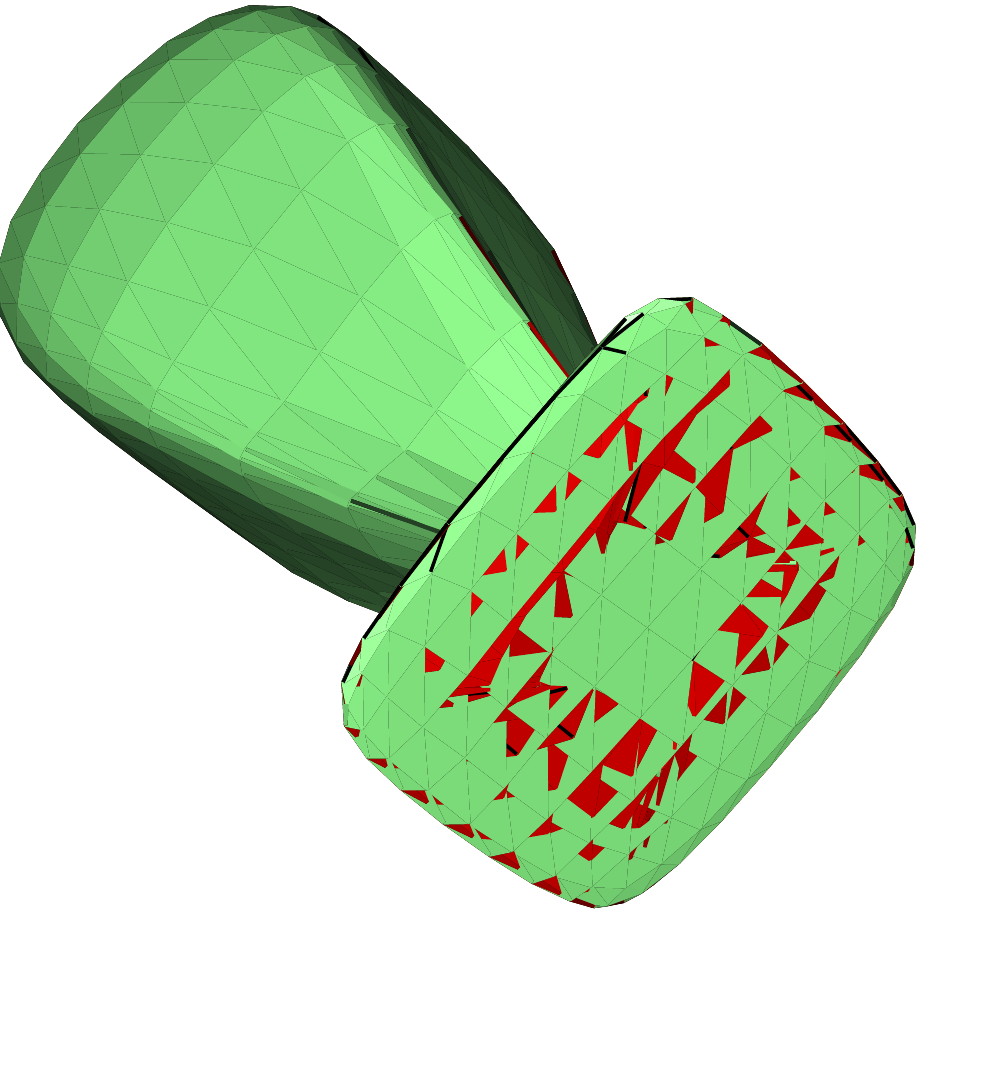} 
		\caption{Linear scheme}
	\end{subfigure}
	\captionsetup{justification=centering}
	\caption{Triangulated Tube mesh and meshes generated by MBY and BY after 3 iterations.
		\\
		Colors emphasize self intersection, or the lack of it.}
	\label{fig:butr_tube}
\end{figure*}

It is shown in \cite{dfh:09} that the linear 4-point scheme 
produces a self intersecting curve for an input polygon 
with edges of significantly different lengths. 
We observe the same 
artifact in the case of 3D meshes refined by the linear K4 and BY schemes.
This artifact is not surprising,
since both schemes are generalizations of the 4-point scheme.

An example of such performance by the BY scheme is depicted in Figure \ref{fig:butr_tube}.
The self intersections can be observed in the actual 3D  models
given in our Github repository (see Section \ref{sec:implem}).
Also in Figure \ref{fig:butr_tube}(c) this self intersection
is indicated by the red color, which is assigned to the "inner side
of the surface".
On the other hand, there are no self intersections in the surfaces
generated by the modified schemes (see Figure \ref{fig:butr_tube}(b)).

%------------------------------------------------------------------------------
\subsection{Demonstration of the editing capabilities of the modified schemes}
\label{subsec:video}
Three videos demonstrating the variety of geometries obtainable with
modified schemes are on our Youtube channel at \cite{youtube}. The videos
show geometry morphing processes when the initial mesh is kept and
the initial normals are rotated. 
We obtain a sequence of eleven sets of initial normals,
starting from all normals equal to some normal $n^*$, 
and arriving in ten steps at the naive normals of the mesh.
The initial normal at a mesh point in case $i, i = 0,...,10,$ is the weighted 
geodesic average between $n^*$ and the naive normal
at that point, with weight $\mu_i = i/10$.
Refining four times each set of initial data by the same modified subdivision scheme,
a sequence of geometries is obtained. These geometries are combined 
into a morphing video, demonstrating changes from a 
surface generated by the corresponding linear scheme (in the case when 
all initial normals are equal to $n^*$, see Section 
\ref{subsec:circ_as_lin}), to the surface generated from the naive normals.
Three snapshots of one of the videos are given in Figure \ref{fig:lp_edit}.
Note the changes in the curvatures due to the changes in the initial normals.

We compute a numerical quantity which estimates the "distance" between
the limit normals and the normals of the limit surface, for the different 
initial normals in cases $i=0,1,...,10$.
For every PNP in a final mesh, we compute the angle between
the calculated normal of the PNP and the naive normal in the final mesh
at the point of the PNP (approximating the normal of the limit surface there).

Table \ref{tbl:angle_err} contains averages of these angles for the
weights $\mu_i$, denoted by $\xi_i, i = 0,...,10$. Note, 
that $\xi_i$ decrease monotonically with $i$.
\begin{table}[h]
	\begin{center}
\begin{tabular} { | c|c | c | c | c | c | c | c | c |c | c | c| }
	\hline 
	& & & & & & & & & & &\\
	$\mu_i$ &
	0.0 &
	0.1 & 
	0.2 & 
	0.3 & 
	0.4 & 
	0.5 & 
	0.6 & 
	0.7 & 
	0.8 & 
	0.9 & 
	1.0 \\
	& & & & & & & & & & &\\
	\hline
	& & & & & & & & & & &\\
	$\xi_i$&
85\textdegree&
77\textdegree&
71\textdegree&
64\textdegree&
56\textdegree&
48\textdegree&
38\textdegree&
29\textdegree&
22\textdegree&
16\textdegree&
12\textdegree\\
	& & & & & & & & & & &\\
	\hline
\end{tabular}
\end{center}
\captionsetup{justification=centering}
\caption{Averages of the angles between the generated normals at the final \\
	refinement level
	and the corresponding naive normals of the final mesh.}
\label{tbl:angle_err}
\end{table}
\begin{figure*}
	\begin{subfigure}[b]{0.3\textwidth}
		\centering
		\includegraphics[trim={4cm 10cm 2.5cm 8.5cm},clip, scale=0.3]
		{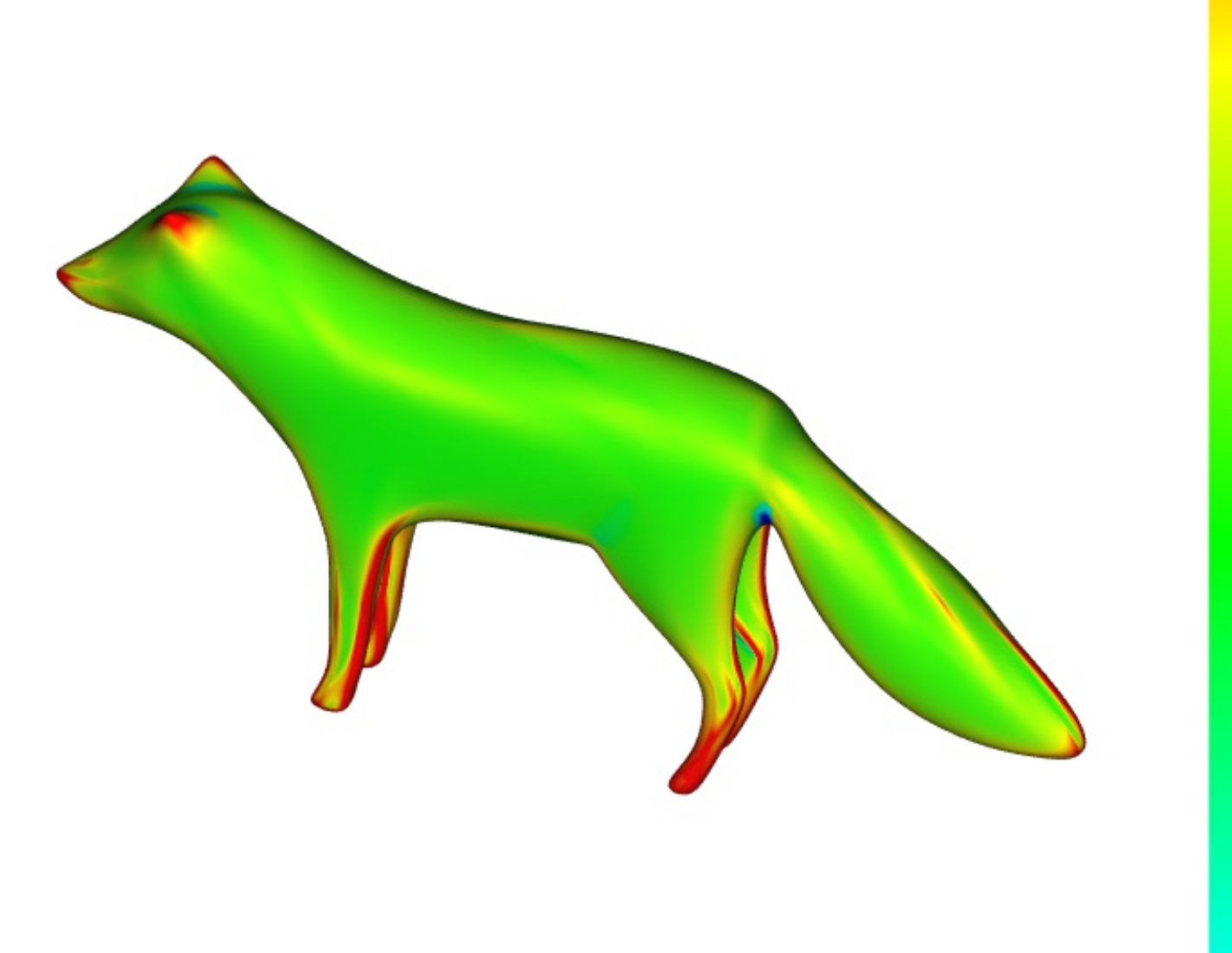} 
		\caption{}
	\end{subfigure}
	\quad
	\begin{subfigure}[b]{0.3\textwidth}
		\centering
		\includegraphics[trim={4cm 10cm 3.5cm 8.5cm},clip, scale=0.3]
		{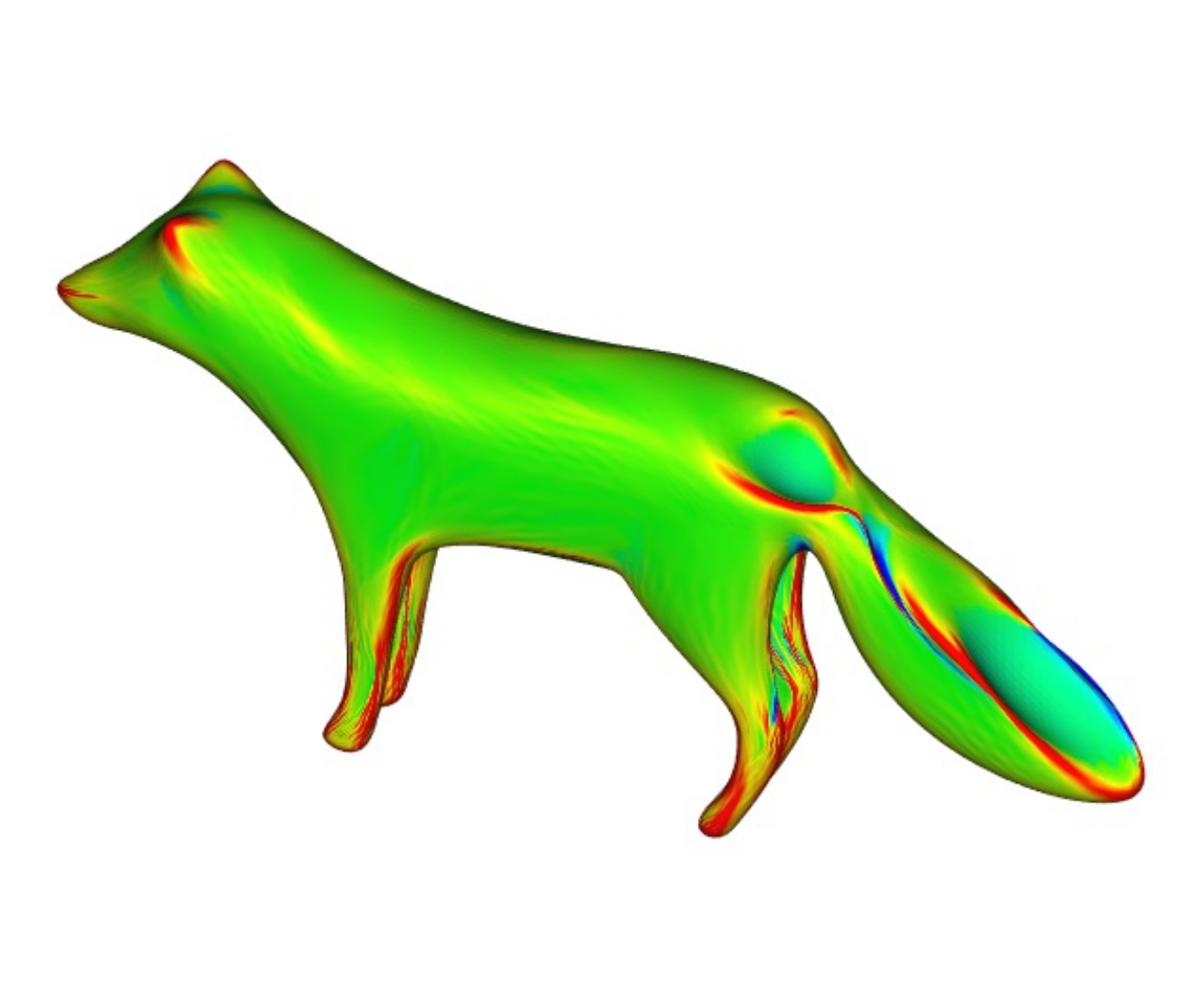} 
		\caption{}
	\end{subfigure}
	\quad
	\begin{subfigure}[b]{0.3\textwidth}
		\centering
		\includegraphics[trim={4cm 10cm 3.5cm 8.5cm},clip, scale=0.3]
		{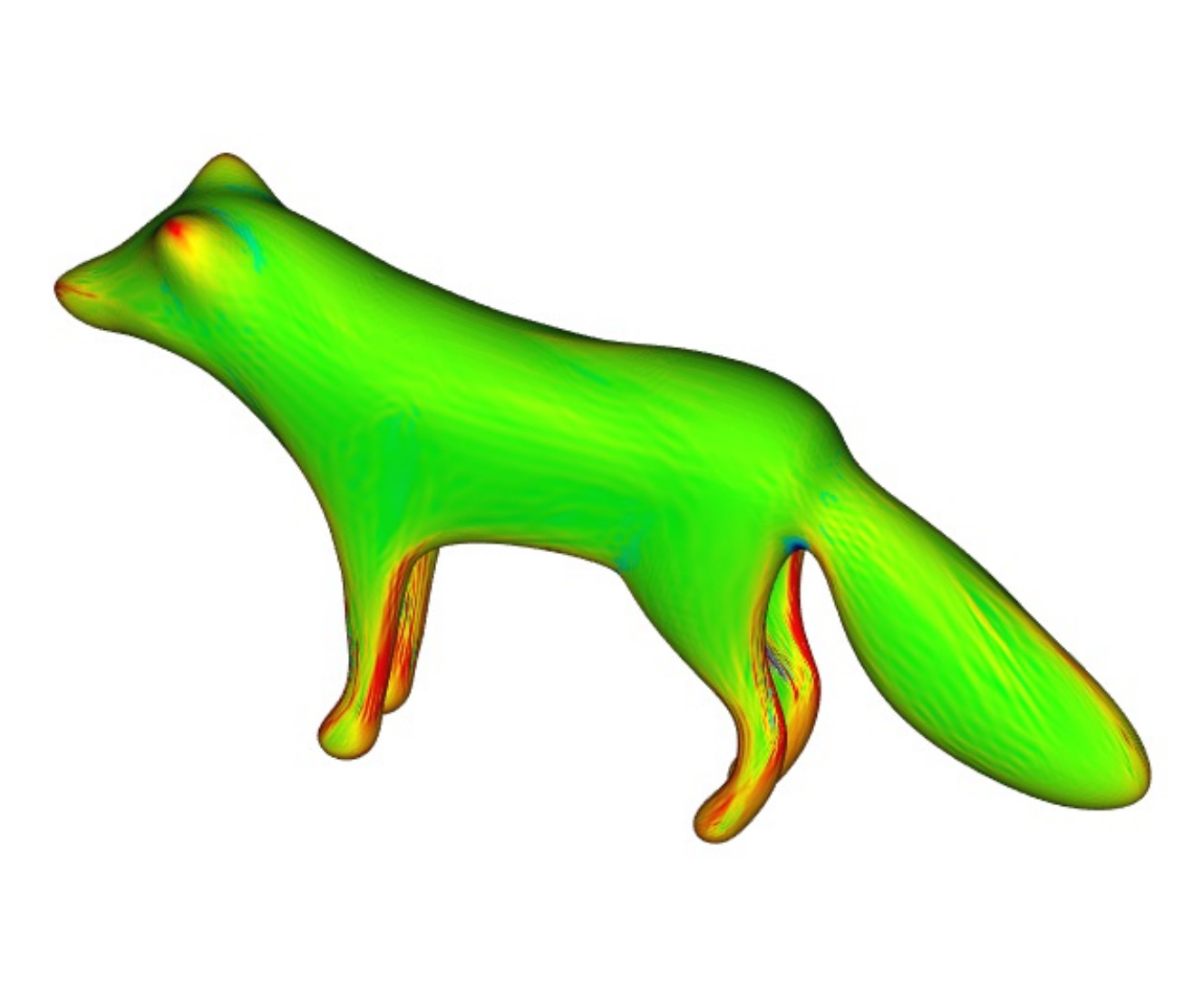} 
		\caption{}
	\end{subfigure}
	\captionsetup{justification=centering}
	\caption{ Surfaces generated by MLP, starting from the same "fox" mesh 
		with different initial normals.
		Colors indicate magnitudes in the range  [-0.5, 0.5] of curvature at the vertices of the final refined mesh, generated by four iterations.}
	\label{fig:lp_edit}
\end{figure*}

Table \ref{tbl:angle_err} indicates that the closer are the initial normals 
to the naive normals,
the closer are the limit normals to the normals of the limit surface. 

Another observation is that one can setup initial normals such that a modified
scheme computes a surface with unexpected geometry. See Figure \ref{fig:lp_edit}b,
for an example. Observe that in Figure \ref{fig:lp_edit} the main changes are
in the tail of the fox.

%------------------------------------------------------------------------------
\section{Implementation}
\label{sec:implem}
We provide an implementation of the algorithms and  
comparison methods, developed in this work, in our Github repository at \cite{github}.
The implementation is in the Python language. The input and the refined mesh files
of the examples studied in this paper are in that repository too.  

%------------------------------------------------------------------------------
\section{Conclusions and future work}
\label{sec:future}
In this paper we design an extension  of the 2D circle average to 3D. This is 
indeed an extension, since the 3D circle average
coincides with the 2D circle average when the two averaged PNPs are in the
same plane.
We modify surface-generating linear subdivision schemes refining points 
to surface-generating schemes refining PNPs, using the 3D circle average.
The modified schemes can generate a variety of new geometries from a given 
mesh, by editing the initial normals. These editing capabilities are 
demonstrated in Figure \ref{fig:lp_edit} and by three videos \cite{youtube}.  

Our investigation of the performance of the modified schemes included
more examples than those of Section \ref{subsec:results}.
An overall observation is that the results of a modified scheme with naive 
normals appear to be smoother than those of the 
corresponding 
linear scheme 
in case the initial mesh is homogeneous (consists of edges with lengths
of the same order of magnitude).
Also, the results of the modified approximating schemes are 
"blown up" versions of their 
corresponding 
linear counterparts, and are
not necessarily 
contained
in the convex hull of the initial mesh. 
\\

\noindent Several research directions should be addressed  in the future:
\begin{itemize}
	\item 
	To prove the convergence of the modified schemes.
	\item 
	To analyze the smoothness of the modified schemes.
\end{itemize}
These two topics are addressed in \cite{ld:16},\cite{ld:19} for certain 2D schemes.
\begin{itemize}
	\item  
	How to support creases/sharp edges with the modified schemes?
	\item 
	To design a new binary operation between two point-normal pairs 
	such that the modified schemes with this operation  generate limit normals which are
	the normals of the limit surface,
	a property not possessed by our modified schemes (see Section \ref{subsec:results}).	
\end{itemize}
%------------------------------------------------------------------------------
%\section*{\refname}
%\bibliographystyle{plain}
%\bibliography{refs_3D_v02}

%\input{refs.bbl}
%-------------------------------------------------------------------------------
\end{document}